\title{Online Disjoint Set Covers: Randomization is not Necessary}
\author{Marcin Bienkowski}{University of Wroc{\l}aw, Poland}{marcin.bienkowski@cs.uni.wroc.pl}{https://orcid.org/0000-0002-2453-7772}{}
\author{Jarosław Byrka}{University of Wroc{\l}aw, Poland}{jaroslaw.byrka@cs.uni.wroc.pl}{https://orcid.org/0000-0002-3387-0913}{}
\author{Łukasz Jeż}{University of Wroc{\l}aw, Poland}{lukasz.jez@cs.uni.wroc.pl}{https://orcid.org/0000-0002-7375-0641}{}
\authorrunning{M. Bienkowski, J. Byrka and Ł. Jeż}
\keywords{Disjoint Set Covers, Derandomization, pessimistic Estimator, potential Function, online Algorithms, competitive Analysis}
\DeclarePairedDelimiter{\ceil}{\lceil}{\rceil}
\DeclarePairedDelimiter{\set}{\lbrace}{\rbrace}
\renewcommand{\leq}{\leqslant}
\renewcommand{\geq}{\geqslant}
\newcommand{\OPT}{\textsc{Opt}\xspace}
\newcommand{\ALG}{\textsc{Alg}\xspace}
\newcommand{\DET}{\textsc{Det}\xspace}
\newcommand{\RAND}{\textsc{Rand}\xspace}
\newcommand{\I}{E}
\newcommand{\e}{\mathrm{e}}
\newcommand{\E}{\mathbb{E}}
\newcommand{\eps}{\varepsilon}
\newcommand{\plow}{\ensuremath{p_S}}
\newcommand{\T}{\mathcal{T}_E}
\newcommand{\R}{\mathcal{R}}
\definecolor{blue}{RGB}{0,50,200}
\definecolor{magenta}{RGB}{255,0,255}
\begin{document}
\maketitle

\DeclareFontShape{T1}{lmr}{m}{scit}{<->ssub * lmr/m/scsl}{}


\begin{abstract}
In the online disjoint set covers problem, the edges of a hypergraph are
revealed online, and the goal is to partition them into a maximum number of
disjoint set covers. That is, $n$ nodes of a hypergraph are given at the
beginning, and then a sequence of hyperedges (subsets of $[n]$) is presented to
an~algorithm. For each hyperedge, an online algorithm must assign a color (an
integer). Once an~input terminates, the gain of the algorithm is the number of
colors that correspond to valid set covers (i.e., the union of hyperedges that
have that color contains all $n$~nodes).

We present a deterministic online algorithm that is $O(\log^2 n)$-competitive,
exponentially improving on the previous bound of $O(n)$ and matching the
performance of the best randomized algorithm by Emek et al.~[ESA 2019].

For color selection, our algorithm uses a novel potential function, which can be
seen as an online counterpart of the derandomization method of conditional
probabilities and pessimistic estimators. There are only a few cases where
derandomization has been successfully used in the field of online algorithms. In
contrast to previous approaches, our result extends to the 
following new challenges: (i) the potential function derandomizes not only the
Chernoff bound, but also the coupon collector's problem, (ii) the value of \OPT
of the maximization problem is not bounded a~priori, and (iii) we do not produce
a fractional solution first, but work directly on the input. 
\end{abstract}


\section{Introduction}

In this paper, we study online algorithms for maximizing the number of set
covers of a set of nodes. We focus on a hypergraph (set system) $G = (V,E)$ that
has $n = |V|$ nodes and where each hyperedge $S \in E$ is a subset of nodes from
$V$. A subset $E' \subseteq E$ is called \emph{set cover} if $\bigcup_{S \in E'}
S = V$, i.e., every node is covered by at least one hyperedge of $E'$. In the
\emph{disjoint set covers} (DSC) problem \cite{FeHaKS02,CarDu05,PaBaVa15}, the
goal is to partition the set of hyperedges $E$ into \emph{maximum number} of
mutually disjoint subsets $E = E_1 \uplus E_2 \uplus \dots \uplus E_k$, where
each $E_j$ is a set cover. Note that $E$ is a multi-set, i.e., it can contain
multiple copies of the same hyperedge.

The problem has been studied in a theoretical setting, but as we discuss later,
it also finds applications in sensor networks~\cite{CarDu05} or assignment
tasks~\cite{PaBaVa15}.

\subparagraph{Coloring Perspective.}

When constructing a solution to the DSC problem, it is convenient to think in
terms of coloring hyperedges.\footnote{The DSC problem should not be confused
with the hyperedge coloring problem, which requires that the hyperedges of the
same color be disjoint.} Each color then corresponds to a~subset of hyperedges
that have that color, and the color is \emph{fully used} if the hyperedges
colored with it form a set cover. The problem then becomes equivalent to
coloring hyperedges, so that the number of fully used colors is maximized.

\subparagraph{Online Variant.}

In this paper, we focus on the \emph{online} variant of the DSC problem, where
the set $V$ is given in advance, but the hyperedges of $E$ arrive in an online
fashion. Once a~hyperedge~$S \in E$ arrives, it must be colored immediately and
irrevocably. Again, the goal is to maximize the number of fully used colors. The
performance of an online algorithm is measured by the competitive ratio, i.e.,
the ratio of the number of fully used colors produced by the optimal
\emph{offline} algorithm $\OPT$ to that of an online algorithm $\ALG$.

\subparagraph{Our Contribution.}

We present a deterministic online $O(\log^2 n)$-competitive algorithm \DET for
the DSC problem, which exponentially improves on the $O(n)$-competitive
algorithm by Emek et al.~\cite{EmGoKa19} and matches the performance of their
randomized algorithm~\cite{EmGoKa19}. We discuss the challenges and technical
contribution in more detail in \autoref{sec:our_results}.


\subsection{Offline Scenario: Previous Results}
\label{sec:offline_scenario}

The disjoint set covers problem is a fundamental NP-complete
problem~\cite{CarDu05} that can be approximated within a~factor of $(1+o(1))
\cdot \ln |V|$~\cite{PaBaVa15} and cannot be approximated within a~factor of
$(1-\eps) \cdot \ln |V|$ for $\eps > 0$ unless $\textsf{NP} \subseteq
\textsf{DTIME}(n^{\log \log n})$~\cite{FeHaKS02}.\footnote{The authors of
\cite{FeHaKS02} call this problem \emph{set cover packing} or \emph{one-sided
domatic problem}.}

\subparagraph{OPT vs. Min-degree.}

We use $\OPT(E)$ to denote the maximum number of disjoint set covers of a
hypergraph $G=(V,E)$. This value is also called \emph{cover-decomposition
number}~\cite{BoPrRS13}. We denote the minimum degree of the hypergraph
$G=(V,E)$ as $\delta(E) \triangleq \min_{i \in V} |S \in E \,:\, i \in S|$. Note
that trivially $\OPT(E) \leq \delta(E)$. While this bound may not be tight
(cf.~\autoref{sec:related}), $\delta(E)$ serves as a natural benchmark for
approximation and online algorithms.

\subparagraph{Random Coloring and its Straightforward Derandomization.}

An $O(\log n)$-approximation algorithm for the \emph{offline} DSC
problem~\cite{PaBaVa17} can be obtained by coloring each hyperedge with a color
chosen uniformly at random from the palette of $\Theta(\delta(E) / \log n)$
colors. To analyze this approach, we focus on a single node $i \in V$. We say that
node $i$ \emph{gathers} color $r$ if $i$ is contained in a~hyperedge colored with
$r$. Since node $i$ is contained in at least $\delta(E)$ hyperedges, and there
are $\Theta(\delta(E) / \log n)$ available colors, $i$ gathers all palette
colors with high probability. By the union bound, this property holds for all
nodes, i.e., all $\Theta(\delta(E) / \log n)$ colors are fully used by an
algorithm (also with high probability). Since $\OPT(E) \leq \delta(E)$, the
approximation ratio of $O(\log n)$ follows.

The hyperedges can be processed in a fixed order, and the random choice of
a~color can be replaced by the deterministic one by a simple application of the
method of conditional probabilities~\cite{PaBaVa17,AloSpe00}.


\subsection{Online Scenario: Previous Results}
\label{sec:online_scenario}

In the online case, an algorithm first learns the set of nodes $V$, and then the
edges of $E$ are revealed sequentially. For notational convenience, we use $E$
to denote both the set and a~sequence of hyperedges (the input). We use
$\ALG(E)$ to denote the number of fully used colors in the solution of an online
algorithm $\ALG$.

It is important to note that no parameter of the hypergraph other than the
number of nodes is known a priori. In particular, an online algorithm does not
know the value of $\delta(E)$ in advance.

\subparagraph{Competitive Ratio.}

We say that $\ALG$ is \emph{(non-strictly) $c$-competitive} if there exists 
$\beta \geq 0$ such that 
\begin{equation}
\label{eq:competitive_ratio}
    \textstyle \ALG(E) \geq \OPT(E) \,/\, c - \beta.
\end{equation}
While $\beta$ can be a function of $n$, it cannot depend on the input sequence
$E$. If \eqref{eq:competitive_ratio} holds with $\beta = 0$, 
the algorithm is \emph{strictly $c$-competitive}.

For randomized algorithms, we replace $\ALG(E)$ with its expected value taken
over the random choices of the algorithm.

\subparagraph{Randomized Algorithms.}

The current best randomized algorithm was given by Emek et
al.~\cite{EmGoKa19}; it achieves the (strict) competitive ratio of $O(\log^2 n)$. The
general idea of their algorithm is as follows. To color a hyperedge $S$, their
algorithm first computes the minimum degree~$\delta^*$ of a node from $S$. By a
combinatorial argument, they show that they can temporarily assume that
$\delta(E) = O(n \cdot \delta^*)$. Their algorithm then chooses $\ell$ uniformly
at random from the set $\set{ \delta^*, 2 \delta^*, 4 \delta^*, \dots, 2^{\log
n} \cdot \delta^* }$; with probability $\Omega(1/\log n)$ such $\ell$ is 
a~2-approximation of~$\delta(E)$. Finally, to color $S$, it chooses a color 
uniformly at random from the palette of $\Theta(\ell / \log^2 n)$ colors, 
using the arguments similar to those in the offline case.

We refrain from discussing it further here, as we present a variant of their
algorithm, called \RAND, in \autoref{sec:algorithm_definition} (along with a
description of the differences from their algorithm). 

The best lower bound for the (strict and non-strict) competitive ratio of a
randomized algorithm is $\Omega(\log n / \log \log n)$~\cite{EmGoKa19}; it
improves on an~earlier bound of $\Omega(\sqrt{\log n})$ by Pananjady et al.
\cite{PaBaVa15}.

\subparagraph{Deterministic Algorithms.}

The deterministic case is well understood if we restrict our attention to
\emph{strict} competitive ratios. In such a case the lower bound  
is $\Omega(n)$~\cite{PaBaVa15}. The asymptotically matching upper bound $O(n)$
is achieved by a~simple greedy algorithm~\cite{EmGoKa19}.

On the other hand, the current best lower bound for the non-strict
deterministic competitive ratio is $\Omega(\log n / \log \log
n)$~\cite{EmGoKa19}.\footnote{This discrepancy in achievable strict and
non-strict competitive ratios is quite common for many maximization problems:
the non-strict competitive ratio allows the algorithm to have zero gain on very
short sequences, thus avoiding initial choices that would be bad in the long
run.} The $O(n)$ upper bound of~\cite{EmGoKa19} clearly holds also in the
non-strict setting, but no better algorithm has been known so far.

\begin{table}[t]
\centering
\begin{tabular}{c|c|c}
\hline
& \textbf{Upper bound} & \textbf{Lower bound} \\
\hline
randomized & \multirow{2}{*}{$O(\log^2 n)$~\cite{EmGoKa19}} & \multirow{2}{*}{$\Omega(\log n / \log \log n)$~\cite{EmGoKa19}} \\
strict and non-strict & & \\
\hline
deterministic & \multirow{2}{*}{$O(n)$~\cite{EmGoKa19}} & \multirow{2}{*}{$\Omega(n)$~\cite{PaBaVa15}}  \\ 
strict & & \\
\hline
deterministic
 & $O(n)$~\cite{EmGoKa19} & \multirow{2}{*}{$\Omega(\log n / \log \log n)$~\cite{EmGoKa19}}  \\ 
non-strict
 & $\mathbf{O(log^2\, n)}$~(\autoref{thm:main}) & \\
\hline
\end{tabular}
\caption{Previous and new bounds on the strict and non-strict competitive ratios for the 
online DSC problem, for randomized and deterministic algorithms.}
\label{tab:results}
\end{table}

\subparagraph{Lack of General Derandomization Tools.}

Unlike approximation algorithms, derandomization is extremely rare in online
algorithms.\footnote{Many online problems (e.g., caching~\cite{Young02,AdCzER19}
or metrical task systems~\cite{BoLiSa92,BuCoRa23,BuCoLL19}) exhibit a provable
exponential discrepancy between the performance of the best randomized and
deterministic algorithms. For many other problems, the best known deterministic
algorithms are quite different (and more complex) than randomized ones.} To
understand the difficulty, consider the standard (offline) method of conditional
probabilities~\cite{AloSpe00} applied to the offline DSC problem: the resulting
algorithm considers all the random choices (color assignments) it could make for
a~given hyperedge $S$, and computes the probability that future random choices,
\emph{conditioned on the current one}, will lead to the desired solution.
Choosing the color that maximizes this probability ensures that the probability
of reaching the desired solution does not decrease as subsequent hyperedges are
processed. In some cases, exact computations are not possible, but the algorithm
can instead estimate this probability by computing a so-called \emph{pessimistic
estimator}~\cite{Raghav88}. This is not feasible in the online setting, since an
algorithm does not know the future hyperedges, and thus cannot even estimate
these probabilities.\footnote{As observed by Pananjady et al.~\cite{PaBaVa15},
however, these probabilities can be estimated if an online algorithm knows the
final min-degree~$\delta(E)$ in advance, which would lead to an $O(\log
n)$-competitive algorithm for this semi-offline scenario.}


\subsection{Our Technical Contribution}
\label{sec:our_results}

In this paper, we present a deterministic online polynomial-time algorithm \DET
that is $O(\log^2 n)$-competitive, which exponentially improves the previous
bound of $O(n)$. This resolves an open question posed by the authors
of~\cite{EmGoKa19} who asked whether the method of conditional expectations
could be used to derandomize their algorithm. 

Our bound is obtained for the non-strict competitive ratio: as we discussed in
the previous subsection, this is unavoidable for the DSC problem. Our result
requires a relatively small~($1/4$) additive term~$\beta$ in the definition of the
competitive ratio \eqref{eq:competitive_ratio}.

We begin by constructing a \emph{randomized} solution \RAND. It will be
a variation of the approach by Emek et al.~\cite{EmGoKa19}; we present it in
\autoref{sec:algorithm_definition}. As \RAND is closely related to the previous
randomized algorithm, it is quite plausible that it achieves the same
competitive ratio of $O(\log^2 n)$. However, our analysis does not support this
conjecture. Instead, we use \RAND as a stepping stone to our deterministic
algorithm in the following way.

\begin{itemize}
\item We define a particular random event, denoted $\T$, of \RAND executed on
    instance $\I$. 
\item We show that for each execution of \RAND satisfying $\T$, it holds that
    $\RAND(\I) \geq \Omega(1/\log^2 n) \cdot \OPT(\I) - 1/4$. 
\end{itemize}

We will provide the exact definition of the event $\T$ in
\autoref{sec:introducting_potential}. For now, we just note that $\T$ is a
property that certifies that, \emph{at each step $t$}, we can relate the number
of colors gathered so far by each node $i$ to its current degree. 

While a relaxed version of the property $\T$ (requiring that the relation
between the gathered colors and the current degree \emph{only at the end of the
input}) follows quite easily with a constant probability, it seems that $\T$
itself is quite strong and does not hold with a~reasonable probability.
However, this is not an obstacle to creating a deterministic solution, as we
show in the following claim.
\begin{itemize}
    \item It is possible to replace the random choices of \RAND by deterministic
    ones (in an online manner), so that $\T$ always holds. 
\end{itemize}
This will immediately imply the competitive ratio of the resulting deterministic
algorithm.

Our approach is based on a novel potential function that guides the choice of
colors. As we discuss later in \autoref{sec:introducting_potential}, our
approach can be seen as an~online-computable counterpart of the method of
conditional probabilities that simultaneously controls the derandomization of
the Chernoff bound and the coupon collector's problem.


\subsection{Related Work}
\label{sec:related}

Applications of the DSC problem include allocating servers to users in file
systems and users to tasks in crowd-sourcing platforms~\cite{PaBaVa15}. 

Another application arises in a sensor network, where each node corresponds to
a~monitoring target and each hyperedge corresponds to a sensor that can monitor
a set of targets. At any given time, all targets must be monitored. A possible
battery-saving strategy is to partition the sensors into disjoint groups (each
group covering all targets) and activate only one group at a time, while the
other sensors remain in a low-power mode~\cite{CarDu05}. 

When the assumption that each sensor can only participate in a single group is
dropped, this leads to more general sensor coverage problems. The goal is then
to maximize the lifetime of the network of sensors while maintaining the
coverage of all targets. The offline variants of this problem have been studied
both in general graphs~\cite{BeCSZ04,CaThLW05,PaBaVa17} and in geometric
settings~\cite{BuEJVY07,GibVar11}.

Another line of work studied the relationship between the minimum degree
$\delta(E)$ and the cover-decomposition number $\OPT(E)$. As pointed out in
\autoref{sec:offline_scenario}, $\delta(E)/\OPT(E) \geq 1$. This ratio is
constant if $G$ is a graph~\cite{Gupta78}, and it is at most $O(\log n)$ for
every hypergraph; the latter bound is asymptotically
tight~\cite{BoPrRS13}. Interestingly enough, all the known papers on the DSC
problem (including ours) relate the number of disjoint set covers to
$\delta(E)$. It is an~open question whether our estimate of the competitive
ratio is tight: it could possibly be improved by relating the gain of an
algorithm directly to $\OPT(E)$ instead of $\delta(E)$.


\subsection{Preliminaries}

An input to our problem is a gradually revealed hypergraph $G=(V,E)$. $V$
consists of $n$ nodes numbered from $1$ to $n$, i.e., $V = [n]$. The set $E$ of
hyperedges is presented one by one: in a step $t$, an algorithm learns $S_t \in
E$, where $S_t \subseteq [n]$, and has to color it. We say that node $i$
\emph{gathers} color $r$ if $i$ is contained in a hyperedge colored with $r$. We
say that a~color $r$ is \emph{fully used} if all nodes have gathered it. The
objective of an~algorithm is to maximize the number of fully used colors. 

At any point in time, the \emph{degree} of a node $j$, denoted $\deg(j)$, is the
current number of hyperedges containing~$j$. Let $\delta(E) = \min_{i \in [n]}
|\set{ S_t \in E : v_i \in S_t }|$, i.e., $\delta(E)$ is the minimum degree of
$G$ at the end of the input $E$. Clearly, $\OPT(E) \leq \delta(E)$. It is
important to note that $\delta(E)$ is not known in advance to an online
algorithm.


\section{Our Algorithm}

\subsection{Definition of RAND}
\label{sec:algorithm_definition}

We start with some notions, defined for each integer $k \geq 0$:
\begin{align*}
    h \triangleq \ceil{ \log n }, 
    && q_k \triangleq \left\lceil \brk*{ 1 - \frac{1}{2n} } \cdot 2^k \right\rceil, 
    && R_k \textstyle \triangleq \left\{ 2^k, \dots, 2^{k+1} - 1 \right\}. 
\end{align*}
Note that $|R_k| = 2^k$. 

\begin{algorithm}[t]
    \caption{Definition of \RAND for a hyperedge $S$ in step $t$}
    \label{alg:rand}
    \begin{algorithmic}[1]
    
    \State{$\triangleright$ \emph{Initialization}}
    \For{each node $i \in [n]$}
        \State $p(i) \gets 0$
            \Comment{all nodes start in phase $0$}
        \For{each integer $k \geq 0$} \Comment{and they have no colors yet}
            \State $C_{i,k} \gets \emptyset$                 
        \EndFor
    \EndFor
    
    \Statex
    \State{$\triangleright$ \emph{Runtime}}
    \For{each hyperedge $S$ appearing in sequence $\I$}
        \State $\plow \gets \min_{i \in S} p(i)$
        \State Sample $k^*$ from $\set {\plow, \plow+1, \dots, \plow + h - 1 }$
            \Comment{uniform distribution}
        \State Sample color $r$ from $R_{k^*}$ 
            \Comment{uniform distribution}
        \For{each node $i \in S$}
            \State $C_{i,k^*} \gets C_{i,k^*} \cup \set{r}$
                \label{line:C_update}
            \If{$|C_{i,p(i)}| \geq q_{p(i)}$}   
                \Comment{end node phase if it gathered $q_{p(i)}$ colors}
                \State{$p(i) \gets p(i)+1$}
            \EndIf                
        \EndFor
    \EndFor
    \end{algorithmic}
\end{algorithm}

For each node $i$ independently, $\RAND$ maintains its \emph{phase} $p(i)$,
initially set to $0$. We use $C_{ik}$ to denote the set of colors from the
palette $R_k$ that node $i$ has gathered so far. A~phase~$k$ for node $i$
ends when it has gathered $q_k$ colors from palette $R_k$. In such a case,
node $i$ increments its phase number $p(i)$ at the end of the step.

We now describe the behavior of \RAND in a single step, when a hyperedge $S$
appears. Let $\plow = \min_{i \in S} p(i)$, where $p(i)$ is the phase of
node $i$ before $S$ appears. \RAND first chooses a random integer $k^*$
uniformly from the set $\set{ \plow, \plow+1, \dots, \plow + h - 1}$. Second,
\RAND chooses a color $r$ uniformly at random from the set $R_{k^*}$ and colors
$S$ with it; in effect all nodes in $S$ gather color $r$.

The pseudocode of \RAND is given in \autoref{alg:rand}.

\subparagraph{Comparison with the Previous Randomized Algorithm.}

\RAND is closely related to the randomized algorithm by Emek et
al.~\cite{EmGoKa19}. The main difference is that the phases of the nodes in
their algorithm are of fixed lengths, being powers of $2$.\footnote{The
pseudocode of their algorithm does not use phases, but with some fiddling with
constants, it can be transformed into one that does.} They use
probabilistic arguments to argue that with high probability each node gathers
$q_k$ colors in a phase of length $\Theta(2^k \cdot \log^2 n)$. Instead, we
treat the number of colors gathered in a phase as a hard constraint (we require
that each node gathers $q_k$ of them), and instead the phase lengths of the
nodes become random variables. As it turns out, this subtle difference allows us
to derandomize the
algorithm.

Another small difference concerns the color selection. While \RAND chooses the
color uniformly at random from the set $R_{k^*}$, the algorithm
of~\cite{EmGoKa19} would choose it from the set~$\uplus_{j=0}^{k^*} R_j$. This
difference affects only the constant factors of the analysis.

\subparagraph{Gain of RAND.}

As mentioned earlier, the gain of \RAND is directly ensured by the algorithm
definition \emph{provided} that each node has completed some number of phases. 

Note that $q_k$ is slightly less than $|R_k| = 2^k$; this gives a better bound
on the the expected phase lengths, while still ensuring that if all nodes
completed phase $\ell$, they gathered at least $2^{\ell-1}$ \emph{common}
colors.

\begin{lemma}
\label{lem:det_gain}
    If every node has completed phase $\ell$, then $\RAND(E) \geq 2^{\ell-1}$.
\end{lemma}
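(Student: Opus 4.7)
The plan is to prove the bound by looking at one palette only, namely $R_\ell$, and counting how many of its colors are fully used at the end of the input. By hypothesis, every node $i$ has completed phase $\ell$, so at some point $c_{i\ell}$ reached $t_\ell$ and is then frozen at that value (since $c_{ik}$ is not updated outside phase $k$). Invoking the remark just after the definition of $c_{ik}$, according to which $c_{i\ell}$ is a lower bound on the number of distinct colors from $R_\ell$ actually gathered by node $i$, each node has gathered at least $t_\ell$ of the $b_\ell$ colors in $R_\ell$, hence fails to gather at most $b_\ell - t_\ell$ of them.

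Next I would bound $b_\ell - t_\ell$ using the definition of $t_\ell$. Since $t_\ell = \lceil (1 - 1/(2n)) \cdot b_\ell \rceil \geq (1 - 1/(2n)) \cdot b_\ell$, we get $b_\ell - t_\ell \leq b_\ell/(2n)$. This is precisely the purpose of the $1/(2n)$ slack baked into the definition of $t_k$: each node may skip roughly $b_\ell/(2n)$ colors from $R_\ell$ without collectively ruining too many of them.

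The final step is a simple double count. Let $M$ be the set of pairs $(i,r)$ with $i \in [n]$ and $r \in R_\ell$ such that node $i$ has not gathered color $r$. By the previous paragraph, $|M| \leq n \cdot b_\ell/(2n) = b_\ell/2$. Every color $r \in R_\ell$ that fails to be fully used contributes at least one pair to $M$ (any node that misses it), so at most $b_\ell/2$ colors of $R_\ell$ are not fully used. Hence at least $|R_\ell| - b_\ell/2 = b_\ell/2$ colors of $R_\ell$ are fully used, giving $\DET(E) \geq b_\ell/2$ as required.

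The main (and essentially only) subtlety is that $c_{i\ell}$ is a lower bound, not an exact count, for the distinct $R_\ell$-colors gathered by node $i$; however the inequality points in the right direction, so it causes no real trouble. I see no benefit in summing contributions from the palettes $R_k$ with $k < \ell$ as well: palette $R_\ell$ alone already yields the claimed bound, so the proof stays a one-palette counting argument.
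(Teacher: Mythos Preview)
Your proof is correct and follows essentially the same approach as the paper: both restrict attention to palette $R_\ell$, bound the number of colors each node misses by $b_\ell - t_\ell \leq b_\ell/(2n)$, and then use a union/double-count over the $n$ nodes to conclude that at least $b_\ell/2$ colors of $R_\ell$ are fully used. Your closing remark that summing over earlier palettes is unnecessary also matches the paper's observation following the lemma.
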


\begin{proof}
    In phase $\ell$, each node gathers at least $q_\ell$ colors from the palette
    $R_\ell$, i.e., all colors from~$R_\ell$ except at most $|R_\ell| - q_\ell
    \leq 2^\ell / (2n)$ colors. Thus, all nodes share at least $|R_\ell| - n \cdot
    (2^\ell / (2n)) = 2^{\ell-1}$ common colors from $R_\ell$, i.e., $\RAND$
    fully uses at least $2^{\ell-1}$ colors.
\end{proof}

Note that we could sum the above bound over all phases completed by all nodes,
but this would not change the asymptotic gain. 

The above lemma points us to the goal: to show that for an input $E$ with a
sufficiently large $\delta(\I)$, each node completes an appropriately large
number of phases. In \autoref{sec:random_event_T}, we show that if $\delta(\I) =
\Omega(2^\ell \cdot \log^2 n)$, then each node completes $\ell$ phases,
\emph{provided} a certain random event $\T$ occurs.


\subsection{Constructing the Potential: Insights and Definitions}
\label{sec:introducting_potential}

In the field of online algorithms, the derandomization has been successfully
conveyed several times by replacing the method of conditional probabilities by
an \emph{online-computable potential function} that guides the choice of the
deterministic algorithm~\cite{AlAABN09,BiFeSc21,BucNao09a,LeUmXi23}.

This method is based on the following framework. First, introduce $\ell$ random
expressions~$\set{Z_i}_{i=1}^\ell$ to be controlled. Second, define a potential
function $\Phi = \sum_{i=1}^\ell \exp(Z_i)$. Third, show that the random actions
of an algorithm at each step decrease $\exp(Z_i)$ (for each~$i$) in expectation,
which implies that $\Phi$ decreases as well. (By the probabilistic method, this
implies the existence of a \emph{deterministic action} of an algorithm in a step
that does not increase~$\Phi$.) Finally, by the non-negativity of the
exponential function, $\exp(Z_i)$ is bounded by the initial value $\Phi^0$ of
the potential (in each step), which shows that $Z_i$ can always be bounded by
$\ln(\Phi^0)$. This process can be seen as a derandomization of the Chernoff
bound / high probability argument.

In order to apply this very general framework, we must overcome several
technical difficulties, which we explain below. Except for the definition of
$\Phi$ (and related definitions of~$w_{ik}, c_{ik}$, $d_k$ and $Z_i$), the
discussion in this section is informal and will not be used in formal proofs
later.

\subparagraph{Variables to be Controlled.}

The first step is to identify the variables to control. Natural choices are the
node degrees and the number of colors gathered so far by each node. 

For this purpose, we define $c_{ik} = |C_{ik}|$ for each node $i$ and each phase
$k \geq 0$.

We also introduce counters $w_{ik}$, which are initially set to zero. Recall
that whenever a~new hyperedge~$S$ containing node $i$ arrives, \RAND computes
$\plow = \min_{i \in S} p(i)$. For each node~$i \in S$ such that $p(i) \leq
\plow + h - 1$, we increment the counter $w_{i,p(i)}$. Note that these
$w_{i,p(i)}$ variables are incremented exactly for nodes $i$ for which there is
a non-zero probability of increasing their set of colors~$C_{i,p(i)}$ (as then a
random integer $k^*$ chosen by \RAND has a~non-zero chance of being equal
to~$p(i)$).

By the definition of $w_{ik}$, at each step $\sum_{k \geq 0} w_{ik} \leq
\deg(i)$. While these quantities are not necessarily equal, we will treat
$\sum_{k \geq 0} w_{ik}$ as a good proxy for $\deg(i)$ and deal with the
discrepancy between these two quantities later.

\subparagraph{Linking the Variables.}

Now we want to introduce an expression that links $\sum_{k \geq 0} w_{ik}$ (the
proxy for degree) with $\sum_{k \geq 0} c_{ik}$ (the number of colors gathered)
for a node $i$. A simple difference of these two terms does not make sense: the
expected growth of $c_{ik}$ varies over time, since it is easier to gather new
colors when $c_{ik}$ is small. This effect has been studied in the context of
the coupon collector's problem~\cite{MitUpf17}. To overcome this issue, we
introduce the following function, defined for each integer $k \geq 0$:
\begin{align*}
    d_k(m) & \triangleq h \cdot \sum_{j=1}^{m} \frac{2^k}{2^k-j+1}  
        && \text{(defined for each $m \leq 2^k$)}
\end{align*}
Note that in a process of choosing random colors from palette $R_k$ of $2^k$
colors, the expected number of steps till $m$ different colors are gathered is
$d_k(m) / h$. 

Now we focus on a single node $i$ in phase $p(i)$. We consider a sequence of
hyperedges~$S$ containing node $i$, neglecting those hyperedges $S$ for which
$p(i) \geq \plow + h$. That is, all considered hyperedges increment the counter
$w_{i,p(i)}$. Then, the value of $d_k(c_{i,p(i)})$ corresponds to the expected
number of such hyperedges needed to gather $c_{i,p(i)}$ colors from the
palette~$R_{p(i)}$. We can thus use the expression $\sum_{k \geq 0} (w_{ik} - 2
\cdot d_k(c_{ik}))$ to measure the progress of node~$i$: small values of this
expression indicate that it is gathering colors quite fast, while large values
indicate that it is falling behind. Note that since $c_{ik} \leq q_k \leq 2^k$,
the value of $d_k(c_{ik})$ is always well defined. 

We note that the previous applications of the potential function
method~\cite{AlAABN09,BiFeSc21,BucNao09a,LeUmXi23} did not require such
transformations of variables: in their case, the potential function was used to
guide deterministic \emph{rounding}: the function~$\Phi$ directly compared the
cost (or gain) of a deterministic algorithm with that of an online fractional
solution. Instead, our solution operates directly on the input, without the need
to generate a fractional solution first.

\subparagraph{Scaling.}

Using the random choices of \RAND, we can argue that in expectation the value 
of~$Z^*_i \triangleq \sum_{k \geq 0} (w_{ik} - 2 \cdot d_k(c_{ik}))$ is decreasing
in time. However, to argue that $\exp(Z^*_i)$ is also decreasing in expectation,
we would have to ensure that $Z^*_i$ is upper-bounded by a~small constant (and
use the fact that $\exp(x)$ can be approximated by a linear function for small
$x$).

In the previous papers~\cite{AlAABN09,BiFeSc21,BucNao09a,LeUmXi23}, this
property was achieved by scaling down $Z^*$ by the value of \OPT. An algorithm
was then either given an upper bound on \OPT (in the case of the throughput
maximization of the virtual circuit routing~\cite{BucNao09a}) or \OPT was
estimated by standard doubling techniques (in the case of the cost minimization
for set cover variants~\cite{AlAABN09,BiFeSc21,LeUmXi23}). In the latter case,
the algorithm was restarted each time the estimate on \OPT was doubled.
Unfortunately, the DSC problem (which is an unbounded-gain maximization problem)
does not fall into any of the above categories, and the doubling approach does
not seem to work here. 

Instead, we replace the scaling with a weighted average. That is, for 
each node $i$, we define 
\[
    Z_{i} \triangleq 
    \sum_{k \geq 0} \frac{w_{ik} - 2 \cdot d_k(c_{ik})}{4 h \cdot 2^k}
\]
and the potential as
\[
    \Phi \triangleq \sum_{i \in [n]} \exp(Z_i).
\]

Note that all counters and variables defined above are random variables; they
depend on particular random choices of $\RAND$. We use $p^t(i)$, $\deg^t(i)$,
$w^t_{ik}$, $c^t_{ik}$, $Z^t_i$ and $\Phi^t$ to denote the values of the
corresponding variables at the end of step $t$ of the algorithm (after \RAND has
processed the hyperedge presented in step $t$). The value of $t = 0$ corresponds
to the state of these variables at the beginning of the algorithm; note that
$p^0(i) = \deg^0(i) = w^0_{ik} = c^0_{ik} = Z^0_i = 0$ for all $i$ and $k$.
Therefore,
\begin{equation}
\label{eq:potential_initial}
    \Phi^0 = n.
\end{equation}

\subparagraph{Random event $\T$.}

For an input instance $E$ consisting of $T$ steps, we define a random event $\T$
that occurs if $\Phi^t \leq n$ for each step $t \in \set{0, \dots, T}$ of \RAND
execution on input~$\I$.


\subsection{Relating Potential to Algorithm Performance}
\label{sec:random_event_T}

We begin by presenting the usefulness of the event $\T$. We emphasize that the
following lemma holds for all executions of \RAND in which the event $\T$
occurs. Its proof is deferred to \autoref{sec:phases}.

\begin{restatable}{lemma}{deltaphases}
\label{lem:delta_to_phases}
    Fix a sequence $\I$ such that $\delta(E) > 24 h \cdot \ln(4\e \cdot n) \cdot
    2^\ell$ for some integer $\ell \geq 0$. If $\T$ occurs, then each node has
    completed its phase $\ell$.
\end{restatable}

Using the lemma above, we can relate the gain of \RAND on \emph{an arbitrary}
input $\I$ to that of \OPT, if only $\T$ occurs.

\begin{lemma}
\label{lem:competitive_ratio}
    Fix a sequence $\I$. If $\T$ occurs, then $\RAND(\I) \geq \OPT(\I) / (96 h
    \cdot \ln(4\e \cdot n)) - 1/4$.
\end{lemma}

\begin{proof}
    Let $r \triangleq 24 h \cdot \ln(4\e \cdot n)$. We consider two cases.
    \begin{itemize}
    \item First, we assume that $\delta(E) > r$. Then we can find an integer $\ell
        \geq 0$ such that $r \cdot 2^\ell < \delta(E) \leq r \cdot 2^{\ell+1}$.
        By \autoref{lem:delta_to_phases}, each node then completes its phase
        $\ell$, and so by \autoref{lem:det_gain}, $\RAND(\I) \geq 2^{\ell-1}
        \geq \delta(E) / (4 r)$.
    \item Second, we assume that $\delta(E) \leq r$. Trivially, $\RAND(\I) \geq 0
        \geq (\delta(E) - r) / (4 r)$.
    \end{itemize}
    In both cases, $\RAND(\I) \geq (\delta(E) - r) / (4 r) \geq \OPT(E) / (4r) -
    1/4$. 
\end{proof}


\subsection{Derandomization of RAND}

To analyze the evolution of $\Phi$, we note that $\Phi^t$ (and also other
variables $w^t_{ik}$, $c^t_{ik}$ or $Z^t_i$) depends only on the random choices of
\RAND till step $t$ (inclusively).  Moreover, $\set{\Phi^t}_{t \geq 0}$ is 
a~supermartingale with respect to the random choices of \RAND in consecutive steps.
Specifically, the following lemma holds; its proof is deferred to
\autoref{sec:potential_monotonicity}.

\begin{restatable}{lemma}{supermartingale}
\label{lem:supermartingale}
    Fix a step $t$ and an outcome of random choices till step $t-1$ inclusively.
    (In particular, this will fix the value of $\Phi^{t-1}$.) Then, $\E[\Phi^t]
    \leq \Phi^{t-1}$, where the expectation is taken exclusively over random
    choices of \RAND in step $t$.
\end{restatable}

The above lemma states that the value of $\Phi$ is non-increasing in
expectation. In fact, an~inductive application of this lemma shows that
$\E[\Phi^t] \leq \Phi^0 = n$. However, this does not imply that $\T$ occurs with
a reasonable probability, especially when the input length is large.\footnote{By
Markov's inequality, for a chosen step $t$, $\Phi^t \leq 2 n$ holds with
probability at least $1/2$. While such a~relaxed bound on $\Phi^t$ would be
sufficient for our needs, in our proof, we need such a bound to hold for all steps
$t$ simultaneously.}

However, using \autoref{lem:supermartingale}, we can easily derandomize $\RAND$
using the method of conditional probabilities using potential $\Phi$ as an
online-computable counterpart of a pessimistic estimator. To do this, we proceed
iteratively, ensuring at each step $t$ that $\Phi^t \leq n$. This is trivial at
the beginning, since $\Phi^0 = n$ by \eqref{eq:potential_initial}. 

Suppose we have already fixed the random choices of \RAND till step $t-1$
inclusively and have $\Phi^{t-1} \leq n$. Consider a hyperedge $S$ presented in
step $t$. Note that all other variables indexed by $t-1$, such as $p^{t-1}(i)$,
are also fixed. Then \autoref{lem:supermartingale} states that $\E[\Phi^t] \leq
\Phi^{t-1} \leq n$. That is, the random choice of a color at step $t$ guarantees
that $\E[\Phi^t] \leq n$. This choice is made from a finite and well-defined set
of colors $\R \triangleq \biguplus_{\plow \leq k \leq \plow+h-1} R_k$, where
$\plow = \min_{i \in S} p^{t-1}(i)$.

By the probabilistic method, at each step $t$, there exists a \emph{deterministic}
choice of a~color from~$\R$ that ensures that $\Phi^t \leq n$. The resulting
algorithm is called \DET. (If more than one color leads to $\Phi^t \leq n$,
$\DET$ chooses any of them.) Since $|\R|$ is bounded by a~polynomial of $n$ and
$|E|$, \DET can be implemented in polynomial time by simply checking all
possible colors of $\R$.

\begin{theorem}
\label{thm:main}
    \DET is $O(\log^2 n)$-competitive for the DSC problem.
\end{theorem}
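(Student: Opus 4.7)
My plan is to reduce Theorem~\ref{thm:main} directly to Lemmas~\ref{lem:det_gain} and~\ref{lem:delta_to_phases}, combined with the trivial inequality $\OPT(E) \leq \delta(E)$ from the preliminaries. Set $C \triangleq 24 h \cdot \ln(4\e \cdot n)$, which is $O(\log^2 n)$ since $h = \ceil{\log n}$. The target is to establish $\DET(E) \geq \OPT(E)/(4C) - 1/4$, matching the promised additive term $\beta = 1/4$ and yielding competitive ratio $4C = O(\log^2 n)$.

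Given an input $E$, I would pick $\ell \geq 1$ to be the largest integer with $C \cdot b_\ell < \delta(E)$, provided such an integer exists. By maximality, $\delta(E) \leq C \cdot b_{\ell+1} = 2C \cdot b_\ell$, which gives $b_\ell \geq \delta(E)/(2C)$. Lemma~\ref{lem:delta_to_phases} applied to this $\ell$ then yields that every node has completed at least $\ell$ phases, so Lemma~\ref{lem:det_gain} gives $\DET(E) \geq b_\ell/2 \geq \delta(E)/(4C) \geq \OPT(E)/(4C)$, which is even slightly stronger than what we need. The degenerate case, in which no such $\ell \geq 1$ exists, occurs exactly when $\delta(E) \leq C \cdot b_1 = C$; then $\OPT(E) \leq C$ forces $\OPT(E)/(4C) \leq 1/4$, and the desired inequality collapses to $\DET(E) \geq 0$, which holds trivially. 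This is where the tiny additive term $1/4$ in~\eqref{eq:competitive_ratio} earns its keep.

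No genuine obstacle lies at the level of Theorem~\ref{thm:main}: all the heavy lifting — controlling the potential $\Phi$ and bounding the growth of the $w^*_{ik}$ counters — is already packaged in Lemmas~\ref{lem:steps_bound} and~\ref{lem:w_star_bound}, and hence in Lemma~\ref{lem:delta_to_phases}. Once those lemmas are in hand, the final step is a short arithmetic combination; the only mildly subtle points are choosing the correct $\ell$ and recognizing that the boundary case $\delta(E) \leq C$ is absorbed by the additive constant rather than requiring any separate algorithmic treatment.
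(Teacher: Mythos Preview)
Your proposal is correct and follows essentially the same approach as the paper: the paper defines $q = 24h \cdot \ln(4\e \cdot n)$ (your $C$), splits into the two cases $\delta(E) > q$ and $\delta(E) \leq q$, and in the first case picks $\ell$ with $q \cdot b_\ell < \delta(E) \leq q \cdot b_{\ell+1}$ to combine Lemmas~\ref{lem:delta_to_phases} and~\ref{lem:det_gain} exactly as you do, arriving at the same bound $\DET(E) \geq \OPT(E)/(4q) - 1/4$.
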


\begin{proof}
    As described above, $\DET$ guarantees that $\Phi^t \leq n$ for each step $t
    \in \set{0, \dots, T}$, i.e.,~$\T$~occurs. Note that
    \autoref{lem:competitive_ratio} lower-bounds the gain of \RAND in every
    execution conditioned on $\T$, and \DET can be seen as such an
    execution. Therefore, the bound of \autoref{lem:competitive_ratio} can be
    applied, yielding
    \[
        \DET(\I)
            \geq \frac{\OPT(\I)}{96 \cdot h \cdot \ln(4\e \cdot n)} - \frac{1}{4},
    \]
    i.e., the competitive ratio of \DET is at most $96 \cdot h \cdot \ln(4\e
    \cdot n) = O(\log^2 n)$.

    Note that, by the lower bounds of~\cite{PaBaVa15,EmGoKa19}, an additive term
    (in our case equal to $1/4$) is inevitable for obtaining a sub-linear
    competitive ratio.
\end{proof}


\section{Bounding Number of Phases}
\label{sec:phases}

In this section, we fix an input sequence $E$ consisting of $T$ steps. Our goal
is to estimate the number of phases of nodes in the execution of \RAND,
conditioned on the random event~$\T$, i.e., to prove
\autoref{lem:delta_to_phases}. To this end, we consider a node $i$, assume that
it has completed $\ell$~phases, and show an upper bound on the final degree of
$i$ as a function of $\ell$.

\subparagraph{Bounding Variables w Using Potential.}

Recall that in some steps where the degree of a node $i$ grows, 
the counter $w_{i,p(i)}$ is incremented. Thus, our first goal is to 
upper-bound values of these counters at the end of the execution of \RAND.

Below, $H(m)$ denotes the $m$-th harmonic number.
The following technical claim is proved in \autoref{sec:technical}.

\begin{restatable}{claim}{harmonic}
\label{cla:harmonic}
    For each $k \geq 0$ it holds that $H(2^k) - H(2^k-q_k) \leq \ln (4 \e \cdot n)$.
\end{restatable}

\begin{lemma}
\label{lem:d_q_k_bound}
    Fix a step $t \leq T$, a node $i$ and a phase $k \geq 0$. Then,
    $d_k(c^t_{ik}) \leq h \cdot \ln(4\e \cdot n) \cdot 2^k$.
\end{lemma}

\begin{proof}
    Note that at all times, $c^t_{ik} \leq q_k$. 
    Using the definition of $d_k$,
    \begin{align*}
        d_k(c_{ik}) \leq d_k(q_k) 
        & = h \cdot \sum_{j=1}^{q_k} \frac{2^k}{2^k-j+1} \\ 
        & = h \cdot \left( H(2^k) - H(2^k-q_k) \right) \cdot 2^k \\
        & \leq h \cdot \ln(4 \e \cdot n) \cdot 2^k.
            && \text{(by \autoref{cla:harmonic})}
    \qedhere
    \end{align*}
\end{proof}

\begin{lemma}
\label{lem:steps_bound}
    Fix a node $i$ and a phase $\ell \geq 0$. If the event $\T$ occurs, then
    $\sum_{k=0}^\ell w^T_{ik} \leq 8 h \cdot \ln(4\e \cdot n) \cdot 2^\ell$.
\end{lemma}

\begin{proof}
    Fix the last step $t \leq T$ in which $\sum_{k=0}^\ell w_{ik}$ increases. By
    the choice of $t$, we have $w^t_{ik} = c^t_{ik} = 0$ for every phase $k > \ell$.

    Since $\T$ occurs, $n \geq \Phi^t = \sum_{j \in [n]} \exp(Z^t_j)$. Due to
    the non-negativity and monotonicity of the exponential function, $Z^t_i \leq
    \ln n$. Using the definition of $Z^t_i$, we get
    \begin{align*}
        \ln n \geq Z^t_i 
        & = \sum_{k \geq 0} \frac{w^t_{ik} - 2 \cdot d_k(c^t_{ik})}{4 h \cdot 2^k} \\
        & = \sum_{k=0}^\ell \frac{w^t_{ik} - 2 \cdot d_k(c^t_{ik})}{4 h \cdot 2^k} \\
        & \geq \frac{1}{4h \cdot 2^\ell} \cdot 
            \left( \sum_{k=0}^\ell w^t_{ik} 
                - 2 \cdot \sum_{k=0}^\ell h \cdot \ln(4\e \cdot n) \cdot 2^k 
            \right).
            && \text{(by \autoref{lem:d_q_k_bound})}
    \end{align*}
    Hence, again by the choice of $t$, 
    \begin{align*}
    \sum_{k=0}^\ell w^T_{ik} = \sum_{k=0}^\ell w^t_{ik} 
    & \leq 4 h \cdot 2^\ell \cdot \ln n + 4 h \cdot \ln(4\e \cdot n) \cdot 2^\ell
        < 8 h \cdot \ln(4\e \cdot n) \cdot 2^\ell.
    \qedhere
    \end{align*}
\end{proof}

\subparagraph{Bounding Node Degrees.} 

Recall that whenever a new hyperedge $S$ containing node $i$ arrives, $\plow =
\min_{i \in S} p(i)$ is determined. Then, for each node $i \in S$, if $p(i) \leq
\plow + h - 1$, the counter $w_{i,p(i)}$ is incremented. If $p(i) \geq \plow + h$,
the degree of $i$ grows, but $w_{i,p(i)}$ is not incremented. To estimate the
degree of $i$, we therefore introduce the counters $s_{ik}$, which are incremented in
the latter case. That is, for each node $i \in S$, we always have $\sum_{k \geq 0} (w_{ik} +
s_{ik}) = \deg(i)$.

The growth of the variables $s_{ik}$ is not controlled by the potential, but
they grow only for nodes whose degree is very high compared to the current
minimum degree. By constructing an~appropriate charging argument, we can link
their growth to the growth of other variables~$w_{ik}$.

\begin{lemma}
    \label{lem:w_star_bound_helper}
    Fix a step $t \leq T$, a node $i$, and a phase $\ell \geq 0$. Then, 
    $s^t_{i\ell} \leq \sum_{j \in [n]} \sum_{r=0}^{\ell-h} w^t_{jr}$.
\end{lemma}

\begin{proof}
    We fix node $i$, phase $\ell \geq 0$, and show the lemma by induction on
    $t$. The inductive basis is trivial, as $s^0_{i\ell} = 0 = \sum_{j \in [n]}
    \sum_{r=0}^{\ell-h} w^0_{jr}$.
    
    Now suppose that the lemma statement holds for step $t-1$. We look at how
    both sides of the inequality change as we increase the step superscripts
    from $t-1$ to $t$, and argue that the increase of the right hand side is at
    least as large as the increase of the left hand side. If $s_{i\ell}$ does
    not change, the inductive claim follows trivially. Otherwise, $s_{i\ell}$ is
    incremented by $1$. This happens only if $\ell = p(i)$, $i \in S$, and $\ell
    \geq \plow + h$. By the definition of $\plow$, this means that there exists
    at least one node $j^* \in S$ such that $p(j^*) = \plow$, and the
    corresponding counter~$w_{j^*,\plow}$ is also incremented. This means that
    the right hand side of the lemma inequality is incremented by at least
    $\sum_{j \in [n]} \sum_{r=0}^{\ell-h} (w^t_{jr} - w^{t-1}_{jr}) \geq
    w^{t}_{j^*,\plow} - w^{t-1}_{j^*,\plow} = 1$, and the inductive claim
    follows.
\end{proof}

\begin{lemma}
    \label{lem:w_star_bound}
    Fix a node $i$ and a phase $\ell \geq 0$. 
    If the event $\T$ occurs, then 
    $\sum_{k=0}^\ell s^T_{ik} \leq 16 h \cdot \ln(4\e \cdot n) \cdot 2^\ell$. 
\end{lemma}

\begin{proof}
    By \autoref{lem:w_star_bound_helper}, 
    \begin{align*}
        s^T_{ik} 
        \leq \sum_{j \in [n]} \sum_{r=0}^{k-h} w^T_{jr} 
        & \leq n \cdot 8 h \cdot \ln(4\e \cdot n) \cdot 2^{k-h} 
            && \text{(by \autoref{lem:steps_bound})} \\ 
        & \leq 8 h \cdot \ln(4\e \cdot n) \cdot 2^k.
            && \text{(as $h = \ceil{ \log n }$)} 
        \end{align*}
    Hence, $\sum_{k=0}^\ell s^T_{ik}  < 16 h \cdot \ln(4\e \cdot n) \cdot
    2^\ell$. 
\end{proof}

Finally, we can show \autoref{lem:delta_to_phases}, restated below.

\deltaphases*

\begin{proof}
    Suppose for a contradiction that there exists a node $i$ for which $p(i)
    \leq \ell$ at the end of the input. Then,
    \[
        \delta(E) 
        \leq \deg^T(i) = \sum_{k \geq 0} \brk*{ w^T_{ik} + s^T_{ik} } 
        = \sum_{k=0}^\ell \brk*{ w^T_{ik} + s^T_{ik} } 
        \leq 24 h \cdot \ln(4\e \cdot n) \cdot 2^\ell,
    \]
    where the last inequality follows by \autoref{lem:steps_bound} and
    \autoref{lem:w_star_bound}. This would contradict the assumption of the
    lemma.
\end{proof}


\section{Controlling the Potential}
\label{sec:potential_monotonicity}

In this section, we show that $\set{\Phi^t}_{t \geq 0}$ is a supermartingale
with respect to the choices of \RAND made in corresponding steps, i.e., we prove
\autoref{lem:supermartingale}.

Throughout this section, we focus on a single step $t$, in which \RAND processes
a hyperedge~$S$. Recall that \RAND first chooses a random integer $k^*$
uniformly from the set $\set{ \plow, \plow+1, \dots, \plow + h - 1 }$. Second,
conditioned on the first choice, it chooses a~random color uniformly from the
set $R_{k^*}$.

By the definition of our variables, for each node $i$ and each integer $k
\geq 0$, it holds that $w^t_{ik} - w^{t-1}_{ik} \in \set{0,1}$ and $c^t_{ik} -
c^{t-1}_{ik} \in \set{0,1}$.

\begin{lemma}
    \label{lem:c_ik_growth}
    Fix a node $i \in S$ and let $p = p^{t-1}(i)$. If $p \leq \plow + h - 1$,
    then $c^t_{ip} = c^{t-1}_{ip} + 1$ 
    with probability $(2^p-c^{t-1}_{ip})/(h \cdot 2^p)$.
\end{lemma}

\begin{proof}
    By the definition of $\plow$, we have $p \geq \plow$. Combining this with
    the lemma assumption, we get $p \in \set{ \plow, \dots, \plow + h - 1}$. 

    Note that $c^t_{ip} = c^{t-1}_{ip} + 1$ when node $i$ gathers a new color
    from $R_p$, and $c^t_{ip} = c^{t-1}_{ip}$ otherwise. For a node $i$ to
    gather a new color from $R_p$, first the integer $k^*$ chosen randomly from
    the set $\set{\plow, \dots, \plow+h-1}$ must be equal to~$p$, which happens
    with probability~$1/h$. Second, conditioned on the former event, a~color
    chosen randomly from the palette~$R_p$ must be different from all
    $c^{t-1}_{ip}$ colors from $R_p$ gathered so far by node $i$, which happens
    with probability $(|R_p|-c^{t-1}_{ip})/|R_p| =
    (2^p-c^{t-1}_{ip})/2^p$. Hence, the probability of gathering a~new color is
    $(2^p-c^{t-1}_{ip})/(h \cdot 2^p)$.
\end{proof}

We emphasize that the relations involving random variables in the following lemma (e.g.,~the 
statements such as~$Z^t_i \leq Z^{t-1}_i$) hold for all random choices made by \RAND.

\begin{lemma}
\label{lem:Delta_Z_bounded}
Fix a node $i$. Let $p = p^{t-1}(i)$. Then, either $Z^t_i \leq Z^{t-1}_i$ or
\[
    Z^t_i \leq Z^{t-1}_i + \frac{1}{4h \cdot 2^p} 
    + \begin{cases}
        -1/(2h \cdot 2^p \cdot \alpha) & \text{with probability $\alpha$,} \\
        0 & \text{otherwise.}
    \end{cases} 
\]
where $\alpha = (2^p - c^{t-1}_{ip})/(h \cdot 2^p)$. The probability is
computed exclusively with respect to random choices of \RAND in step $t$.
\end{lemma}

\begin{proof}
    For brevity, for an integer $k \geq 0$, we define 
    \begin{align*}
        \Delta w_{ik} &= w^t_{ik} - w^{t-1}_{ik}, \\
        \Delta d_k(c_{ik}) &= d_k(c^t_{ik}) - d_k(c^{t-1}_{ik}).
    \end{align*}
    As $c^t_{ik} \geq c^{t-1}_{ik}$ and $d_k$ is a non-decreasing function, we
    have $\Delta d_k(c_{ik}) \geq 0$.

    Let $S$ be the hyperedge presented in step $t$. By the definition of
    the variables~$w_{ik}$ (cf.~\autoref{sec:introducting_potential}), we have 
    \[
        \Delta w_{ik} = \begin{cases}
            1 & \text{if $i \in S$ and $k = p$ and $p \leq \plow + h - 1$,} \\
            0 & \text{otherwise.}
        \end{cases}    
    \]
    Now we consider two cases. 

    \begin{itemize}
    \item It holds that $i \notin S$ or $p \geq \plow + h$. Then, 
    \[
        Z^t_i - Z^{t-1}_i = \sum_{k \geq 0} \frac{ \Delta w_{ik} - 2 \cdot \Delta d_k(c_{ik})}{ 4h \cdot 2^k }
        \leq \sum_{k \geq 0} \frac{ \Delta w_{ik}}{ 4h \cdot 2^k } = 0.
    \]
    \item It holds that $i \in S$ and $p \geq \plow + h - 1$. Then, 
    $\Delta w_{ip} = 1$ and $\Delta w_{ik} = 0$ for $k \neq p$. Therefore,
    \begin{align*}
        Z^t_i - Z^{t-1}_i
        & = \sum_{k \geq 0} \frac{ \Delta w_{ik} - 2 \cdot \Delta d_k(c_{ik}) }{ 4h \cdot 2^k } \\
        & = \frac{ \Delta w_{ip} - 2 \cdot \Delta d_p(c_{ip}) }{ 4h \cdot 2^p } 
         + \sum_{k \neq p} \frac{ \Delta w_{ik} - 2 \cdot \Delta d_k(c_{ik}) }{ 4h \cdot 2^k } \\
        & \leq \frac{1}{ 4h \cdot 2^p } 
            - \frac{ \Delta d_p(c_{ip})}{ 2h \cdot 2^p }.
    \end{align*}
    To complete the proof, it now suffices to argue that $\Delta d_p(c^t_{ip}) =
    1/\alpha$ with probability $\alpha$ and $0$ with the remaining probability.
    This follows immediately by \autoref{lem:c_ik_growth}: With probability
    $\alpha$ we have $c^t_{ip} = c^{t-1}_{ip} + 1$, and hence $\Delta d_p(c_{ip})
    = d_p(c^t_{ip}) - d_p(c^{t-1}_{ip}) = h \cdot 2^p / (2^p - c^t_{ip} + 1) = h
    \cdot 2^p / (2^p - c^{t-1}_{ip}) = 1/\alpha$. With the remaining probability
    $c^t_{ip} = c^{t-1}_{ip}$ and thus $\Delta d_p(c_{ip}) = 0$.
    \qedhere
    \end{itemize}
\end{proof}

For the final lemma, we need the following technical bound (proven in
\autoref{sec:technical}). This can be seen as a reverse Jensen's type inequality. 

\begin{restatable}{claim}{expbound}
\label{cla:exp_bound}
    Fix $\eps \in [0,1]$, $\alpha \in [\eps, 1]$ and a real $x$.
    Let $X$ be a random variable such that
    \[
        X = \begin{cases}
            x - \eps / \alpha & \text{with probability $\alpha$,} \\
            x & \text{otherwise.} 
        \end{cases}
    \]
    Then, $\E[\e^X] \leq \e^{x-\eps/2}$.
\end{restatable}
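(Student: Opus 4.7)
The plan is to reduce the claim to a scalar inequality in $\eps$ and $\alpha$, and then bound it using a quadratic Taylor estimate for $\e^{-y}$ combined with the hypothesis $\alpha \geq \eps$. Since $X$ takes only two values, one directly computes $\E[\e^X] = \e^x \cdot (\alpha \cdot \e^{-\eps/\alpha} + 1 - \alpha)$, so after dividing by $\e^x$ the claim becomes
\[
    \alpha \cdot \e^{-\eps/\alpha} + 1 - \alpha \;\leq\; \e^{-\eps/2}.
\]
The degenerate case $\alpha = 0$ is trivial: the assumption $\alpha \geq \eps$ forces $\eps = 0$, and both sides equal $1$. So I may assume $\alpha > 0$ and set $y \triangleq \eps/\alpha \geq 0$.

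Next, I would invoke the inequality $\e^{-y} \leq 1 - y + y^2/2$, valid for all $y \geq 0$; this follows from a one-line verification that $f(y) = 1 - y + y^2/2 - \e^{-y}$ satisfies $f(0) = f'(0) = 0$ and has nonnegative second derivative $f''(y) = 1 - \e^{-y}$. Multiplying by $\alpha$ and substituting $y = \eps/\alpha$ yields
\[
    \alpha \cdot \e^{-\eps/\alpha} + 1 - \alpha \;\leq\; 1 - \eps + \frac{\eps^2}{2\alpha}.
\]
The hypothesis $\alpha \geq \eps$ is tailored precisely to absorb the quadratic remainder: it gives $\eps^2/(2\alpha) \leq \eps/2$, so the right-hand side is at most $1 - \eps/2$. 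A final application of the classical estimate $1 - t \leq \e^{-t}$ with $t = \eps/2$ delivers the required $\e^{-\eps/2}$ bound.

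The proof is a short chain of standard scalar estimates, and there is no real obstacle. The only design choice worth highlighting is the use of the \emph{quadratic} Taylor upper bound for $\e^{-y}$: the linear estimate $\e^{-y} \leq 1 - y + y$ goes the wrong way, while cruder quadratic variants would not match the target exponent. The quadratic bound is sharp enough that the hypothesis $\alpha \geq \eps$ converts the $y^2/2$ term into exactly a factor-of-two loss in the exponent, which is precisely what the target $\e^{-\eps/2}$ affords; this explains why the constant $2$ appears in the definition $z_{ik} = w_{ik} - 2 \cdot d_k(c_{ik})$ in \autoref{sec:potential}.
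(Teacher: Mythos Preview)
Your proof is correct and follows essentially the same route as the paper: both compute $\E[\e^X] = \e^x \cdot (\alpha\,\e^{-\eps/\alpha} + 1 - \alpha)$, bound this by $\e^x(1-\eps/2)$, and finish with $1-\eps/2 \leq \e^{-\eps/2}$. The only cosmetic difference is the scalar estimate used for $\e^{-y}$ with $y=\eps/\alpha$: the paper goes through $\e^{-y} \leq (1+y)^{-1} \leq 1 - y/2$ (the second step using $y \leq 1$), whereas you use the quadratic Taylor bound $\e^{-y} \leq 1 - y + y^2/2$ and then invoke $y \leq 1$ to absorb the $y^2/2$ term --- both yield the same inequality $\e^{-y} \leq 1 - y/2$ on $[0,1]$.
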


Finally, we can prove \autoref{lem:supermartingale}, restated below.

\supermartingale*

\begin{proof}
    Fix a node $i$. We will show that 
    \begin{equation}
        \label{eq:exp_Z_bound}
        \E[\exp(Z^t_i)] \leq \exp(Z^{t-1}_i).
    \end{equation}
    The lemma then follows by summing the above inequality over all nodes.

    If $Z^t_i \leq Z^{t-1}_i$, \eqref{eq:exp_Z_bound} follows trivially.
    Otherwise, \autoref{lem:Delta_Z_bounded} implies that
    \[ 
        Z^t_i \leq Z^{t-1}_i + \frac{1}{4h \cdot 2^p} + \begin{cases}
            -1/(2h \cdot 2^p \cdot \alpha) & \text{with probability $\alpha$,} \\
            0 & \text{otherwise.}
        \end{cases} 
    \]
    where $p = p^{t-1}(i)$ and $\alpha = (2^p - c^{t-1}_{ip})/(h \cdot 2^p)$. As
    the random choices are fixed until step~$t-1$, the variables $Z^{t-1}_i$, $p$ and
    $\alpha$ are no longer random variables, but real numbers.
    
    Note that $c^{t-1}_{ip} \leq q_p - 1 \leq 2^p -1$ as otherwise phase $p$ of
    node $i$ would have ended in an~earlier step. Hence, $\alpha = (2^p -
    c^{t-1}_{ip})/(h \cdot 2^p) \geq 1/(2h \cdot 2^p)$.

    Thus, $x = Z^{t-1}_i + 1/(4h \cdot 2^p)$, $\eps = 1 / (2h \cdot 2^p)$, and
    $\alpha$ satisfy the conditions of \autoref{cla:exp_bound}. (In particular,
    $\eps \leq \alpha \leq 1$.) This claim now yields
    \[
        \label{eq:exp_bound}
        \E[\exp(Z^t_i)] \leq \exp\left( Z^{t-1}_i + \frac{1}{4h \cdot 2^p} 
            - \frac{1}{2} \cdot \frac{1}{2h \cdot 2^p} \right)
            = \exp(Z^{t-1}_i),      
    \]
    which concludes the proof of \eqref{eq:exp_Z_bound}, and thus also the lemma.
\end{proof}


\section{Conclusions}

In this paper, we have constructed a deterministic $O(\log^2 n)$-competitive
algorithm for the disjoint set covers (DSC) problem. Closing the remaining
logarithmic gap between the current upper and lower bounds is an interesting
open problem that seems to require a new algorithm that goes beyond the
phase-based approach. 

We have developed new derandomization techniques that extend the existing
potential function methods. We hope that these extensions will be useful for
derandomizing other online randomized algorithms, and eventually for providing a
coherent online derandomization toolbox.

\section*{Acknowledgements}

The authors are grateful to the anonymous reviewers for their insightful comments.


\bibliography{references}


\appendix

\section{Proofs of Technical Claims}
\label{sec:technical}

\harmonic*

\begin{proof}
    Assume first that $2^k < 4n$. As 
    $H(m) \leq 1 + \ln m = \ln (\e \cdot m)$ for each $m$, we get
    $H(2^k) - H(2^k-q_k) < H(4n) \leq \ln(4\e \cdot n)$ and the lemma follows.

    Hence, in the following, we assume that $2^k \geq 4n$, obtaining
    \begin{align*}
        2^k-q_k
        & = 2^k - \lceil \brk{1-1/(2n)} \cdot 2^k \rceil 
            && \text{(by the definition of $q_k$)} \\
        & \geq 2^k - \left(1-1/(2n) \right) \cdot 2^k - 1 \\
        & = 2^k/(2n) - 1 \\
        & \geq 2^k / (4n).
            && \text{(as $2^k \geq 4n$)} 
    \end{align*}  
    As $q_k \leq 2^k$,
    we can use the relation $H(m) \geq \ln m$ holding for every $m \geq 0$, 
    obtaining 
    \[
        \label{eq:harmonic_bound}
        H(2^k) - H(2^k-q_k) \leq \ln(\e \cdot 2^k) - \ln(2^k - q_k)
        \leq \ln(\e \cdot 2^k) - \ln(2^k / (4n)) = \ln(4 \e \cdot n).
        \qedhere
    \]
\end{proof}

\expbound*

\begin{proof}
Using $\alpha \geq \eps$, we obtain
\begin{equation}
\label{eq:crucial_1}
    (1+\eps/\alpha) \cdot (1 - \eps/(2\alpha)) = 1 + \eps/(2\alpha) - \eps^2/(2\alpha^2) \geq 1.
\end{equation}
Next, we use the relation $\e^{-y} \leq (1+y)^{-1}$ (holding for every $y > -1$) 
and \eqref{eq:crucial_1} to argue that
\begin{equation}
\label{eq:crucial_2}
    \exp(- \eps / \alpha) 
    \leq (1+\eps/\alpha)^{-1}
    \leq 1-\eps/(2\alpha).
\end{equation}
Finally, this gives
\begin{align*}
    \E[\e^X] 
    & = \alpha \cdot \exp(x-\eps/\alpha) + (1-\alpha) \cdot \exp(x) \\
    & = \e^{x} \cdot \left(\alpha \cdot \exp(-\eps/\alpha) + 1-\alpha  \right) \\
    & \leq \e^{x} \cdot \left(\alpha - \eps/2 + 1-\alpha  \right) && \text{(by \eqref{eq:crucial_2})} \\
    & \leq \e^{x-\eps/2}. && \text{(as $1-\eps/2 \leq \e^{-\eps/2}$ for each $\eps$)}
\qedhere
\end{align*}
\end{proof}

\end{document}